\newcolumntype{L}[1]{>{\raggedright\arraybackslash}p{#1}}
\newcolumntype{C}[1]{>{\centering\arraybackslash}p{#1}}
\newtheorem{theorem}{Theorem}
\title{\textbf{Heterogeneous Agents in the Data Economy}}
\author{Yongheng Hu\footnote{School of International Business, Zhejiang International Studies University, Liuhe Road, Hangzhou 310023, China. Correspondence to: Yongheng Hu (22030101043@st.zisu.edu.cn). This working paper is incomplete and still being improved, all comments and opinions about the article are welcome. Of course all remaining omissions and errors in statement or technique are mine.}}
\begin{document}
\maketitle

\begin{abstract}
In this short paper, we define the investment ability of data investors in the data economy and its heterogeneity. We further construct an analytical heterogeneous agent model to demonstrate that differences in data investment ability lead to divergent economic results for data investors. The analytical results prove that: Investors with higher data investment ability can obtain greater utility through data investment, and thus have stronger incentives to invest in a larger scale of data to achieve higher productivity, technological progress, and experience lower financial frictions. We aim to propose a prerequisite theory that extends the analytical framework of the data economy from the currently prevalent representative agent model to a heterogeneous agent model.

\textbf{Key Words}: Data Economy, Heterogeneous Agents, Data Investment Ability, Data Cost, Financial Friction

\textbf{JEL Codes}: B22, E22
\end{abstract}

\newpage

\section{Introduction}

The rise of data as an economic factor fundamentally shifts the material foundation for information and knowledge production from traditional discursive practices to quantifiable, tradable data resources. This process completes the dataization of information and knowledge and empowers technological innovation (\citealp{costa2024socio}; \citealp{costa2025exploring}; \citealp{gans2025a}; \citealp{gans2025b}), then, it will drive the intelligent transformation of industries, optimize the allocation of production resources, reduce production costs, and enhance production efficiency (\citealp{einav2014economics}; \citealp{schaefer2014long}; \citealp{zhang2018survey}; \citealp{glaeser2018big}; \citealp{corrado2024data}; \citealp{abis2024changing}; \citealp{athey2025presidential}; \citealp{acemoglu2025simple}). Data possesses unique economic attributes as a new kind of production factor, such as non-rivalry, replicability, virtuality, open-source nature, and physical non-destruction (\citealp{lynch2008your}; \citealp{goldfarb2019digital}; \citealp{jones2020nonrivalry}), enabling its integration into the capital accumulation cycle with diminishing marginal costs and increasing returns to scale. This enhances the marginal productivity of traditional production factors while broadening information dissemination channels, decreasing barriers to acquiring knowledge and skills, which substantially makes total factor productivity and innovation levels prosperous (\citealp{graetz2018robots}; \citealp{acemoglu2018race}; \citealp{bessen2019automation}; \citealp{farboodi2019big}; \citealp{farboodi2021model}; \citealp{hemous2022rise}; \citealp{nose2023inclusive}). Meanwhile, this has gradually impacted the quality of human capital (enhancing or diminishing it) and the demand structure for the social labor resource (updating labor skills) (\citealp{aum2025labor}; \citealp{beraja2025inefficient}; \citealp{brynjolfsson2025generative}; \citealp{hampole2025artificial}; \citealp{jiang2025ai}; \citealp{weidmann2025measuring}).

However, is the economic prosperity generated by data elements attributable to all agents in society possessing homogeneous data utilization abilities? Or is it because individual agents with higher levels of data utilization abilities are able to leverage data elements to empower economic development better ? This involves the application of representative agent and heterogeneous agent models in the data economy. Most existing literature on the data economy assumes agents possess representative data utilization abilities (\citealp{jones2020nonrivalry}; \citealp{hu2025analysis}). Although this approach ensures model's simplicity, efficiency, and interpretable economic outcomes, it fails to provide a more objective description of the realities within the data economy. We argue that agents in the data economy exhibit varying abilities to utilize data as a factor of production. These disparities in data utilization abilities (the inequality of abilities) lead to distinct economic consequences for data users, including differences in the scale of data investment, output levels, technological progress efficiency, and financial frictions. Therefore, this paper aims to construct a heterogeneous agent model to identify the heterogeneity of economic variables in the data economy caused by differences in agents' data utilization abilities. This serves as a preliminary theoretical foundation for further expanding the analysis of heterogeneous agent models in the data economy.

Current research on the data economy primarily focuses on two distinct directions. The first direction treats data elements (big data) as an innovative medium for disseminating information and knowledge. Data accelerates knowledge diffusion and reduces production uncertainty (\citealp{farboodi2019big}; \citealp{farboodi2021model}), and can influence agents' rationality, thereby driving shifts in their decision-making and indirectly affecting their utility or welfare acquisition (\citealp{hayek1945use}; \citealp{angeletos2018quantifying}; \citealp{caplin2022rationally}; \citealp{hu2025big}; \citealp{ding2025consumer}; \citealp{caplin2025data}; \citealp{bhandari2025survey}; \citealp{angeletos2025inattentive}; \citealp{acemoglu2025big}). This category of research also focuses on the trade-off between the economic benefits that the data users obtain from accessing data and the protection of the data providers’ rights (\citealp{lazer2014parable}; \citealp{cong2021knowledge}; \citealp{cong2023data}; \citealp{acemoglu2024model}; \citealp{jones2024ai}; \citealp{jones2025much}; \citealp{madsen2025insider}). The second direction of research incorporates data as a new type of production factor into endogenous economic models to explore themes such as economic growth, technological progress, optimal allocation of production factors, and industrial structure transformation (\citealp{jones2020nonrivalry}; \citealp{acemoglu2025simple}; \citealp{hu2025analysis}). And the topic in this direction is a major field in the data economy. The data factor discussed in this paper refers to the perspective of \citet{jones2020nonrivalry}, treating data as a factor of production rather than a medium of information or knowledge. Therefore, the analytical framework of our paper is based on the perspective of “data economy”, which is different from digital economy. Data economy is a form of economic research centered on data as a factor of production, which centers on value creation through the production, distribution, exchange and consumption of data resources. Data economy emphasizes that data, as a non-rivalry, reusable resource, can be embedded in the production function to optimize the efficiency of resource allocation and enhance productivity. Its theoretical roots can be traced back to the breakthrough of the assumption of exogenous technology in neoclassical growth theory (\citealp{swan1956economic}; \citealp{solow1957technical}) and the extension of knowledge spillovers in endogenous growth theory (\citealp{romer1986increasing}; \citealp{lucas1988mechanics}; \citealp{romer1990endogenous}). Digital economy refers to the new economic system reconstructed on the basis of Digital Technologies (like AI and Digital Platform), covering the two dimensions of digital infrastructure, digital industrialization and industrial digitization. Its substance is the use of digital general-purpose technology, triggered by the total factor productivity increase and organizational model change, manifested in the expansion of the production boundary and the reduction of transaction costs.

Heterogeneous agent models are currently widely applied in research on topics such as household income, public tax, firm wealth distribution and monetary policy, etc. The application of this model in the data economy is relatively rare. For specific references on the theory and applications of heterogeneous agent models, see: \cite{huggett1993risk}, \cite{aiyagari1994uninsured}, \cite{krusell1998income}, \cite{adrian2010liquidity}, \cite{brunnermeier2014macroeconomic}, \cite{gabaix2016dynamics}, \cite{kaplan2018monetary}, \cite{ahn2018inequality}, \cite{achdou2020mean}, \cite{achdou2022income}, \cite{bilal2023solving}, \cite{fernandez2023financial}, \cite{moll2025mean}, \cite{maxted2025present}, \cite{le2025optimal}. In this paper, we want to avoid complex analysis or calculations and obtain a clean foundational conclusion, hence, we categorize agents in the data economy into two simple types: Data users and data providers. Heterogeneity stems from data users' varying ability to utilize data. This variation in data utilization ability leads to differences in economic variables among heterogeneous agents. Our discussion primarily analyzes how changes in data utilization ability affect data users' scale of data investment, productivity, technological progress, and financial frictions. The key focus is examining how financial frictions faced by data users evolve as their data utilization ability increases.

\section{The Model}
\subsection{Utility Acquisition}

As data users, enterprises accumulate data capital by purchasing data provided by data providers (other enterprises or consumers) and utilize this data capital for investment to obtain utility. Data providers acquire utility by receiving payments from data users in exchange for providing data. Assuming all agents share similar preferences for terminal consumption $U_t$:
\[U_t(C_{i,t+1})=\frac{(C_{i,t+1})^{1-\gamma}}{1-\gamma}\]

Where $C_i(t+1)$ represents the consumption level of agent $i$ in period $t$, and $\gamma$ denotes the risk aversion coefficient. When $\gamma>1$, it corresponds to CRRA utility, when $\gamma=1$, it corresponds to Log utility. We define the agent's data investment ability as the process of converting purchased data from data provider into data capital and utilizing this data capital for investment. Agent $i$ is thus endowed with data investment ability $\mu_i$, which follows the following normal distribution:
\[\mu_i{\sim}\mathcal{N}(0,\sigma_\mu^2)\]

The agent obtains the output $y_{i,t}$ through data investment:
\[y_{i,t+1}=e^{\mu_i+\varepsilon_{t+1}+\varepsilon_{i,t+1}}D_t\]

For each agent $i$ of the data user, all shocks are independent and identically distributed. Therefore, $\varepsilon_{t+1}\sim\mathcal{N}(-\frac{1}{2}\sigma^2,\sigma^2)$ represents the aggregate shock, while $\varepsilon_{i,t+1}\sim\mathcal{N}(-\frac{1}{2}\sigma^2_1,\sigma^2_1)$ denotes the heterogeneous shock, with $\mathbb{E}[e^{\varepsilon_{t+1}}]=\mathbb{E}[e^{\varepsilon_{i,t+1}}]=1$. And $D_t$ represents the direct contribution of data capital to output, embodying a set of factors driven by data elements that enhance output, such as knowledge innovation diffusion, technological progress, capital appreciation, and human capital enhancement. Each agent possesses assets $y_{i,t+1}(1-\tau_t)$ at the investment initiation stage, where $\tau_t$ is the data cost rate, i.e., the expense incurred for data acquisition paid by data users (enterprises) to data providers (consumers or other enterprises). Considering investment risks, each agent must retain ownership of at least a small portion $\vartheta$ of their assets. This friction results in market incompleteness.

We refer to the analytical framework and methodology of \citet{pastor2020political}: Assuming that the total data costs incurred by data users are redistributed proportionally to data providers. In the macroeconomic model of this paper, only data users (enterprises) and data providers (consumers or other enterprises) exist. Let the set of agents i be denoted as $\mathcal{L}_t$. Then the scale of data users is $m_t=\int _{i\in\mathcal{L}_t}di$, and consequently the scale of data providers is $1-m_t$. Given $\mathcal{L}_t$, the economy's expected total output is fixed. A portion of this output is allocated to data providers (share equal to $\tau_t$), while another portion is allocated to data users (share equal to $1-\tau_t$). The consumption of data providers depends on total output $y_t$:
\[y_{t+1}=\int_{j\in \mathcal{L}_t}y_{j,t+1}\]

For a certain data cost rate $\tau_t=\tau$, the total cost is $\tau y_t$:
\[\tau\int_{j\in \mathcal{L}_t}y_{j,t+1}=\tau\left(\int_{j\in \mathcal{L}_t}e^{\mu_j+\varepsilon_{t+1}+\varepsilon_{j,t+1}}dj\right)D_t\]

Based on the \textit{Law of Large Numbers}:
\[\int_{j\in \mathcal{L}_t}e^{\mu_j+\varepsilon_{j,t+1}}dj=m_t\mathbb{E}[e^{\mu_j+\varepsilon_{j,t+1}}|j\in \mathcal{L}_t]=m_t\mathbb{E}[e^{\mu_j}|j\in \mathcal{L}_t]\mathbb{E}[e^{\varepsilon_{j,t}}|j\in \mathcal{L}_t]=m_t\mathbb{E}[e^{\mu_j}|j\in \mathcal{L}_t]\]

Hence:
\[\tau\int_{j\in \mathcal{L}_t}y_{j,t+1}dj=\tau\left(\int_{j\in \mathcal{L}_t}e^{\mu_j+\varepsilon_{t+1}+\varepsilon_{j,t+1}}dj\right)D_t=\tau D_te^{\varepsilon_{t+1}}m_t\mathbb{E}[e^{\mu_j}|j\in \mathcal{L}_t]\]

The total data cost will be evenly distributed among data providers of size $1-m_t$:
\[C_{s,t+1}=\frac{\tau D_te^{\varepsilon_{t+1}}m_t\mathbb{E}[e^{\mu_j}|j\in \mathcal{L}_t]}{1-m_t}\]

The utility of the data provider at stage $t$ is:
\[\begin{gathered}\mathbb{E}_t[U(C_{s,t+1},\gamma\neq1)|\tau]=\frac{\tau^{1-\gamma}}{1-\gamma}D_t^{1-\gamma}\mathbb{E}_t[e^{(1-\gamma)\varepsilon_{t+1}}]\mathbb{E}_t[e^{\mu_j}|j\in\mathcal{L}_t]^{1-\gamma}\left(\frac{m_t}{1-m_t}\right)^{1-\gamma}\\\mathbb{E}_t[U(C_{s,t+1},\gamma=1)|\tau]=\log(\tau)+\mathbb{E}_t\left[\log[D_te^{\varepsilon_{t+1}}m_t\mathbb{E}[e^{\mu_j}|j\in\mathcal{L}_t]]\right]-\log(1-m_t)\end{gathered}\]

For data users' consumption, assuming the size of data users is $m_t$, each data user $i$ sells $(1-\vartheta)$ capital and retains $\vartheta$ capital, $\vartheta\in(0,1)$. The net income $M_{i,t}$ is:
\[M_{i,t}=\mathbb{E}_t[\pi_{t,t+1}y_{i,t+1}(1-\tau_t)]\]

Where $\pi_{t,t+1}$ is the capital depreciation index, and $\tau_t$ is the data cost rate at stage $t$. Let $N_t^ij$ denote the percentage of firm $i$'s investment in firm $j$'s data capital (risky asset) at time $t$, and $N_it^0$ represent the quantity of risk-free assets. Then the investment budget is:
\[(1-\vartheta)M_{it}=\int_{j\neq i}N_t^{ij}M_{jt}dj+N_{it}^0\]

Specifically, standardizing the stock price and fixing it at 1. Then, for a certain data cost rate $\tau_t=\tau$, the agent's consumption could be written as:
\[C_{i,t+1}=\vartheta y_{i,t+1}(1-\tau_t)+\int_{j\in \mathcal{L}_t}N_t^{ij}y_{j,t}(1-\tau_t)dj+N_{it}^0\]

Under market equilibrium, all firms face identical risk premiums on their assets. Therefore, the optimal investment strategy for entrepreneurs is to allocate funds proportionally to the market value of risk assets. Let the proportion of investment in data capital $N_t^{ij}$ be $\delta(\gamma)$, and the proportion invested in risk-free assets $N_{it}^0$ be $1-\delta(\gamma)$. 

Then, at equilibrium state:
\[\begin{gathered}N_{t}^{ij}M_{jt}=[\delta(\gamma)(1-\vartheta)M_{it}]\times\frac{M_{jt}}{M_{P}}=[\delta(\gamma)(1-\vartheta)M_{it}]\times\frac{M_{jt}}{\int_{p\in \mathcal{L}_{t}}M_{pt}dp}\\N_{t}^{ij}=\frac{\delta(\gamma)(1-\vartheta)M_{it}}{\int_{p\in \mathcal{L}_{t}}M_{pt}dp}\\N_{it}^{0}=(1-\vartheta)M_{it}-\int_{j\neq i}N_{t}^{ij}M_{jt}dj=(1-\vartheta)M_{it}-\frac{[\delta(\gamma)(1-\vartheta)M_{it}]}{\int_{p\in \mathcal{L}_{t}}M_{pt}dp}\int_{j\neq i}M_{jt}dj\end{gathered}\]

According to the \textit{Continuum Hypothesis}:
\[\int_{j\neq i}M_{jt}dj=M_P-M_{jt}\approx M_P\]

Hence:
\[N_{it}^0=(1-\vartheta)M_{it}-\delta(\gamma)(1-\vartheta)M_{it}=[1-\delta(\gamma)](1-\vartheta)M_{it}\]

Under market clearing conditions, where firm $i$'s demand equals firm $j$'s supply: 
\[\begin{gathered}(1-\vartheta)=\int_{j\neq i}N_t^{ij}di=\delta(\gamma)(1-\vartheta)\int_{j\neq i}\frac{M_{it}}{\int_{p\in\mathcal{L}_t}M_{pt}dp}di=\delta(\gamma)(1-\vartheta)\frac{\int_{j\neq i}M_{it}di}{\int_{p\in\mathcal{L}_t}M_{pt}dp}\\\delta(\gamma)=1\end{gathered}\]

That is:
\[\begin{gathered}N_t^{ij}=(1-\vartheta)\frac{e^{\mu_i}}{\int_{p\in\mathcal{L}_t}e^{\mu_p}dp}\\N_{it}^{0}=0\end{gathered}\]

The consumption of data users is:
\[C_{i,t+1}=(1-\tau)D_te^{\mu_i}e^{\varepsilon_{t+1}}[\vartheta e^{\varepsilon_{i,t+1}}+(1-\vartheta)]\]

And the utility function is:
\[\begin{gathered}\mathbb{E}_t[U(C_{i,t+1},\gamma\neq1)|\tau]=\frac{(1-\tau)^{1-\gamma}D_t^{1-\gamma}e^{(1-\gamma)\mu_i}}{1-\gamma}\mathbb{E}_t[e^{(1-\gamma)(\varepsilon_t)}]\mathbb{E}[[\vartheta e^{\varepsilon_{i,t+1}}+(1-\vartheta)]^{1-\gamma}]\\\mathbb{E}_t[U(C_{i,t+1},\gamma=1)|\tau]=\log(1-\tau)+\log[D_te^{\mu_i}]+\mathbb{E}_t\left[\log[e^{\varepsilon_{t+1}}(\vartheta e^{\varepsilon_{i,t+1}}+(1-\vartheta))]\right]\end{gathered}\]

\subsection{Analysis of Heterogeneity}
Define the utility gained by data users through data investment as $V_t^i$, and the utility gained by data providers through payments received for providing data as $V_t^s$. When $V_t^i>V_t^s$, the agent's data investment ability is high, and $V_t^i=V_t^H$. When $V_t^i<V_t^s$, the agent's data investment ability is low, and $V_t^i=V_t^L$. Given a fixed data cost rate $\tau_t=\tau^k$, and assuming equilibrium conditions where the data user set is $\mathcal{L}_k$ with scale $m_t=m^k$, then for agent $i$:
\[V_t^i>V_t^s\]

That is:
\[\begin{gathered}\mathbb{E}_t[U(C_{i,t+1})|\tau^k,m^k]>\mathbb{E}_t[U(C_{s,t+1})|\tau^k,m^k]\\\frac{(1-\tau^k)^{1-\gamma}D_t^{1-\gamma}e^{(1-\gamma)\mu_i}}{1-\gamma}\mathbb{E}_t[e^{(1-\gamma)\varepsilon_{t+1}}]\mathbb{E}[[\vartheta e^{\varepsilon_{i,t+1}}+(1-\vartheta)]^{1-\gamma}]>\begin{bmatrix}\frac{1}{1-\gamma}(\tau^k)^{1-\gamma}D_t^{1-\gamma}\\\times\mathbb{E}[e^{\mu j}|j\in\mathcal{L}_k]^{1-\gamma}\\\times\left(\frac{m^k}{1-m^k}\right)^{1-\gamma}\mathbb{E}_t[e^{(1-\gamma)\varepsilon_t}]\end{bmatrix}\end{gathered}\]

Rearranging and taking the logarithm on both sides of the inequality :
\[\begin{gathered}\mu_i+\frac{1}{1-\gamma}\log(\mathbb{E}[[\vartheta e^{\varepsilon_{i,t+1}}+(1-\vartheta)]^{1-\gamma}])>\begin{bmatrix}\log\left(\frac{\tau^k}{1-\tau^k}\right)+\log(\mathbb{E}[e^{\mu_j}|j\in\mathcal{L}_k])\\+\log\left(\frac{m^k}{1-m^k}\right)\end{bmatrix}\\\mu_i>K(k)=\begin{bmatrix}\log\left(\frac{\tau^k}{1-\tau_k}\right)+\log(\mathbb{E}[e^{\mu_j}|j\in\mathcal{L}_k])\\+\log\left(\frac{m^k}{1-m^k}\right)-\left(\frac{1}{1-\gamma}\log(\mathbb{E}[[\vartheta e^{\varepsilon_{i,t+1}}+(1-\vartheta)]^{1-\gamma}])\right)\end{bmatrix}\end{gathered}\]

Based on the definition of $m^k$ and the distribution of the agent's data investment ability $\mu_i{\sim}\mathcal{N}(\bar{\mu},\sigma_\mu^2)$:
\[\begin{gathered}\begin{gathered}m_{t}^{k}=\int_{K(k)}^{\infty}\phi\left(\mu_{i};\overline{\mu},\sigma_{\mu}^{2}\right)d\mu_{i}=1-\Phi\left(K(k);\overline{\mu},\sigma_{\mu}^{2}\right)\\\mathbb{E}[e^{\mu_j}|j\in\mathcal{L}_k]=\frac{1}{m_{t}^{k}}\int_{K(k)}^{\infty}e^{\mu_{j}}\phi(\mu_{j};\overline{\mu},\sigma_{\mu}^{2})d\mu_{j}=\frac{e^{\overline{\mu}+\frac{1}{2}\sigma_{\mu}^{2}}\left(1-\Phi(K(k);\overline{\mu}+\sigma_{\mu}^{2},\sigma_{\mu}^{2})\right)}{m_{t}^{k}}\\K(k)=\begin{bmatrix}\log\left(\frac{\tau^k}{1-\tau^k}\right)+\overline{\mu}+\frac{1}{2}\sigma_\mu^2+\log\left(\frac{1-\Phi\left(K(k);\overline{\mu}+\sigma_\mu^2,\sigma_\mu^2\right)}{\Phi\left(\mu_k;\overline{\mu},\sigma_\mu^2\right)}\right)\\-\frac{1}{1-\gamma}\log(\mathbb{E}[[\vartheta e^{\varepsilon_{i,t+1}}+(1-\vartheta)]^{1-\gamma}])\end{bmatrix}\end{gathered}\end{gathered}\]

Then, we define $\mu_k=K(k)-\overline{\mu}$: 
\[\mu_{k}=\log\left(\frac{\tau^{k}}{1-\tau^{k}}\right)+\frac{1}{2}\sigma_{\mu}^{2}+\log\left(\frac{1-\Phi(\mu_{k};\sigma_{\mu}^{2},\sigma_{\mu}^{2})}{\Phi(\mu_{k};0,\sigma_{\mu}^{2})}\right)-\frac{1}{1-\gamma}\log(\mathbb{E}[[\vartheta e^{\varepsilon_{i,t+1}}+(1-\vartheta)]^{1-\gamma}])\]

Therefore:
\[m_t^k=\int_{K(k)}^\infty\phi\left(\mu_i;\overline{\mu},\sigma_\mu^2\right)d\mu_i=1-\Phi\left(K(k)-\overline{\mu};0,\sigma_\mu^2\right)=1-\Phi\left(\mu_k;0,\sigma_\mu^2\right)\]

Similarly, for the case where $\gamma=1$, we can obtain a similar result as follows:
\[\mu_k=\log\left(\frac{\tau^k}{1-\tau^k}\right)+\frac{1}{2}\sigma_\mu^2+\log\left(\frac{1-\Phi(\mu_k;\sigma_\mu^2,\sigma_\mu^2)}{\Phi(\mu_k;0,\sigma_\mu^2)}\right)-\mathbb{E}_t[\log(\vartheta e^{\varepsilon_{i,t+1}}+(1-\vartheta))]\]

Now, we set:
\[\mu_k=F(\tau^k,\mu_k)\]

Then:
\[d\mu_k=\frac{\partial F(\tau^k,\mu_k)}{\partial\tau^k}d\tau^k+\frac{\partial F(\tau^k,\mu_k)}{\partial\mu_k}d\mu_k\]

Hence:
\[\frac{d\mu_k}{d\tau^k}=\frac{\frac{\partial F(\tau^k,K(k))}{\partial\tau^k}}{1-\frac{\partial F(\tau^k,K(k))}{\partial K(k)}}>0\]

Therefore, since $(d\mu_k)/(d\tau^k)>0$, the threshold for data investment ability and data costs exhibit a positive correlation, indicating that agents with higher data investment ability incur greater data costs. Moreover, based on $m_t^k =1-\Phi(\mu_k;0,\sigma_{\mu}^2)$, the data investment ability threshold $F(\tau^k,\mu_k)$ also determines whether an agent belongs to the data user group $m_t^k$, where $m_t^k\subset m_t$.

This paper identifies two types of agents with two kinds of data investment ability: When $\mu_H=F(\tau^H,\mu_H)>\mu_k$, the agent possesses high data investment ability. Ideally, they should act as data users to achieve satisfactory utility, and will choose a higher data cost rate $\tau^H$. When $\mu_L=F(\tau^L,\mu_L)<\mu_k$, the agent possesses low data investment ability, and ideally should act as a data provider to achieve satisfactory utility. This paper assumes that the choice of agents with low data investment ability is imperfect. That is, firms with low data investment ability will not immediately switch to being data providers due to the low returns and utility gained from data investment. Instead, they choose to continue as data users and bear the economic consequences resulting from their low data investment ability ($V_t^L<V_t^s$). Such agents will select a lower data cost rate $\tau^L$. Accordingly, this paper posits that differences in firms' data investment ability lead to heterogeneity in data input scale $d_i$, output $y_i$, technology $z_i$ (total factor productivity), and financial friction $\lambda_i$, where: $\mu_i\in\{\mu_H,\mu_L\},\mathbb{F}:\mu_i\to\{y_i,z_i,d_i,\lambda_i\}$. We accordingly propose Theorem 1:

\begin{theorem}
\textit{Heterogeneous firms are classified into High-type and Low-type groups based on the data investment ability thresholds $\mu_k$. High-type firms possess data investment ability $\mu_H(\mu_H>\mu_k)$, and $\mathbb{F}:\mu_H\to\{y_H,z_H,d_H,\lambda_H\}$. Low-type firms possess data investment ability $\mu_L(\mu_L<\mu_k)$, where $\mathbb{F}:\mu_L\to\{y_L,z_L,d_L,\lambda_L\}$. Here, $y_H>y_L,z_H>z_L,d_H>d_L,\lambda_H>\lambda_L$.}
\end{theorem}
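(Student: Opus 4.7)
The plan is to establish the four monotonicities asserted by the mapping $\mathbb{F}:\mu_i\mapsto\{y_i,z_i,d_i,\lambda_i\}$ one claim at a time, leveraging the comparative-statics identity $d\mu_k/d\tau^k>0$ derived at the end of Section 2.2 together with the closed-form output and consumption expressions already in hand.

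First I would dispose of the output inequality by taking conditional expectations in $y_{i,t+1}=e^{\mu_i+\varepsilon_{t+1}+\varepsilon_{i,t+1}}D_t$. Because the model normalizes $\mathbb{E}[e^{\varepsilon_{t+1}}]=\mathbb{E}[e^{\varepsilon_{i,t+1}}]=1$, one gets $\mathbb{E}_t[y_{i,t+1}]=e^{\mu_i}D_t$, and strict monotonicity of the exponential turns $\mu_H>\mu_k>\mu_L$ directly into $y_H>y_L$. For the data-input claim I would invert the implicit fixed point $\mu_k=F(\tau^k,\mu_k)$: because $d\mu_k/d\tau^k>0$, the map $\mu_k\mapsto\tau^k(\mu_k)$ is strictly increasing, so the optimal data cost rate that a high-type firm is willing to bear satisfies $\tau^H>\tau^L$. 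Combined with the larger expected output from Step 1, the equilibrium contractual payment $\tau_t y_{i,t+1}$ paid to providers — which is the monetary counterpart of the data capital actually acquired — is strictly larger for the high type, giving $d_H>d_L$.

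For technology I would identify $z_i$ with the firm-level total factor productivity implicit in the output equation, namely the multiplicative factor $e^{\mu_i}D_t$ that scales raw data into productive output. Steps 1 and 2 together push this factor up along both channels (the direct $\mu_i$ channel and the amplified $D_t$ channel through larger $d_H$), so $z_H>z_L$ follows without further work. Finally, for the financial-friction claim I would return to $M_{i,t}=\mathbb{E}_t[\pi_{t,t+1}y_{i,t+1}(1-\tau_t)]$ and the retention constraint $\vartheta M_{i,t}$: high-type firms have a larger $M_{i,t}$, so the same fractional retention $\vartheta$ absorbs a larger absolute cushion of idiosyncratic risk, raising their endogenous capacity to externalize risky capital through the portfolio weights $N_t^{ij}$ derived in Section 2.1. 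Interpreting $\lambda_i$ as that endogenous capacity delivers $\lambda_H>\lambda_L$.

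The hardest step will be the last one, because $\lambda_i$ is introduced only as an output of the mapping $\mathbb{F}$ and is not given a closed-form definition earlier in the excerpt. The argument therefore hinges on pinning $\lambda_i$ unambiguously to the equilibrium objects $M_{i,t}$, $\vartheta$, and $N_t^{ij}$ in a way that is monotone in $\mu_i$ but not tautological; any looseness here collapses the inequality either into a trivial restatement or into something that needs an additional modeling assumption about how market incompleteness binds at the firm level. The other three inequalities are essentially corollaries of the identity $\mathbb{E}_t[y_{i,t+1}]=e^{\mu_i}D_t$ and the self-selection condition on $\tau^k$, so once the $\lambda_i$ definition is fixed the proof reduces to transparent exponential-monotonicity arguments.
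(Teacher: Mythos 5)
Your first three inequalities are handled by routes that differ from the paper's but are of comparable strength. For output, the paper does not take the firm-level expectation $\mathbb{E}_t[y_{i,t+1}]=e^{\mu_i}D_t$; it works with aggregate output as a function of the threshold, $y_t(\mu_k)=D_te^{\varepsilon_{t+1}}m_te^{\overline{\mu}+\frac{1}{2}\sigma_\mu^2}\,\frac{1-\Phi(\mu_k;\sigma_\mu^2,\sigma_\mu^2)}{1-\Phi(\mu_k;0,\sigma_\mu^2)}$, and proves monotonicity via the hazard-rate inequality $\frac{\phi(\mu_k;0,\sigma_\mu^2)}{1-\Phi(\mu_k;0,\sigma_\mu^2)}>\frac{\phi(\mu_k-\sigma_\mu^2;0,\sigma_\mu^2)}{1-\Phi(\mu_k-\sigma_\mu^2;0,\sigma_\mu^2)}$; your direct exponential-monotonicity argument is simpler and arguably cleaner for the individual-level claim. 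For $d_H>d_L$ the paper, like you, leans on $\tau^H>\tau^k>\tau^L$ plus an explicitly assumed (linear) link between data cost and data scale, and for $z_H>z_L$ it invokes the Jones--Tonetti relation $z_i=d_i^{\eta}$, $\eta\in(0,1)$, rather than your identification of $z_i$ with the factor $e^{\mu_i}D_t$; both are essentially definitional moves.

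The genuine gap is the financial-friction claim, and you diagnosed it yourself. In the paper $\lambda_i$ is not recoverable from $M_{i,t}$, $\vartheta$, or the portfolio weights $N_t^{ij}$: it is imported from Moll (2014) as a borrowing-constraint multiplier, $K_t=\lambda W_t$ with initial wealth $W_0=y_{i,0}(1-\tau_0)$, so that a larger $\lambda$ means a more developed financial market and hence weaker frictions. The paper then writes the wealth dynamics as a diffusion plus a Poisson jump, applies Ito's lemma to $\log K_t$, computes $\mathbb{E}K_t=\lambda W_0\exp\{(r_f+\alpha\hat{\mu}-\lambda+w[\mathbb{E}(1-\alpha L)-1])t\}$, and rearranges into an equation of the form $e^{\lambda t}/\lambda=f(\mu_i,t)$ whose left side is increasing in $\lambda$ (for $\lambda\geq 1$) and whose right side is increasing in $\mu_i$, concluding that the implied $\lambda(\mu_i)$ is increasing. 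Your sketch contains none of this apparatus, and, as you concede, cannot pin $\lambda_i$ down without an additional modeling assumption --- which is precisely what the paper supplies. Without the $K_t=\lambda W_t$ definition and the ensuing SDE computation, $\lambda_H>\lambda_L$ does not follow from anything in Section 2.1, so this quarter of the theorem remains unproved in your version.
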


\begin{proof}
Based on the theory above, we assume a linear relationship between data input costs and data input scale. That is, the enterprises with higher data investment ability have reason to bear higher costs $\tau^H$ to purchase and input more data elements $d_H$, while the enterprises with lower data investment ability choose to bear the economic consequences of data usage, and the consequences will prevent these firms from achieving satisfactory returns and utility. Consequently, they prefer to incur lower costs $\tau^L$ to acquire and utilize data elements $d_L$. Since $\tau^H>\tau^k>\tau^L$, it follows that $d_H(\tau^H)>d_k(\tau^k)>d_L(\tau^L)$. Then, according to \citet{jones2020nonrivalry}, assuming an exponential relationship $\eta$ between data $d_i$ and technology $z_i$, where $z_i=d_i^{\eta},\eta\in(0,1)$, it follows that $z_H(d_H)>z_L(d_L)$.

Due to:
\[y_t=\int_{j\in \mathcal{L}_t}y_{j,t}=D_te^{\varepsilon_{t+1}}m_t\mathbb{E}[e^{\mu_j}|j\in \mathcal{L}_t]=D_te^{\varepsilon_{t+1}}m_te^{\overline{\mu}+\frac{1}{2}\sigma_\mu^2}\frac{\left(1-\Phi(K(k);\overline{\mu}+\sigma_\mu^2,\sigma_\mu^2)\right)}{1-\Phi(K(k);\overline{\mu},\sigma_\mu^2)}\]

And:
\[\mu_k=K(k)-\overline{\mu}\]

Then, we have:
\[y_t(\mu_k)=D_te^{\varepsilon_{t+1}}m_te^{\overline{\mu}+\frac{1}{2}\sigma_\mu^2}\frac{\left(1-\Phi(\mu_k;\sigma_\mu^2,\sigma_\mu^2)\right)}{1-\Phi(\mu_k;0,\sigma_\mu^2)}\]

Denote $f(\mu_k)=\frac{\left(1-\Phi(\mu_k;\sigma_\mu^2,\sigma_\mu^2)\right)}{1-\Phi(\mu_k;0,\sigma_\mu^2)}$, we get:
\[\frac{\partial f(\mu_k)}{\partial\mu_k}=\frac{-\phi(\mu_k;\sigma_\mu^2,\sigma_\mu^2)[1-\Phi(\mu_k;0,\sigma_\mu^2)]+[1-\Phi(\mu_k;\sigma_\mu^2,\sigma_\mu^2)]\phi(\mu_k;0,\sigma_\mu^2)}{\left[1-\Phi(\mu_k;0,\sigma_\mu^2)\right]^2}\]

That is:
\[\frac{\phi(\mu_k;0,\sigma_\mu^2)}{1-\Phi(\mu_k;0,\sigma_\mu^2)}>\frac{\phi(\mu_k-\sigma_\mu^2;0,\sigma_\mu^2)}{1-\Phi(\mu_k-\sigma_\mu^2;0,\sigma_\mu^2)}=\frac{\phi(\mu_k;\sigma_\mu^2,\sigma_\mu^2)}{1-\Phi(\mu_k;\sigma_\mu^2,\sigma_\mu^2)}\]

The inequality above holds for all cases. Therefore, for $\mu_H>\mu_k>\mu_L$, the total output $y_t(\mu_H)$ of enterprises with higher data investment ability will exceed total output $y_t(\mu_L)$ of enterprises with lower data investment ability, i.e., $y_H>y_L$.

According to \citet{moll2014productivity}, the amount of capital available to firms for production is jointly constrained by their own assets and the level of financial market development. Let $\lambda$ denote the borrowing constraint coefficient, representing financial frictions. For $\lambda\geq1$, as $\lambda\to+\infty$, the financial market approaches a perfect state where borrowing resistance for firms becomes negligible. When $\lambda=1$, the financial market is completely closed, and firms must provide all capital required for production themselves. When $\lambda\in(1,+\infty)$, firms face an upper limit on borrowing capacity, specifically $\lambda$ times their individual net wealth. $W_t$ denotes the entrepreneur's net wealth. This paper assumes the initial net wealth $W_0$ originates solely from assets held at the start of the enterprise investment, i.e., $W_0=y_{i,0}(1-\tau_0)$. $K_t$ represents the scale of capital the entrepreneur can borrow.

The enterprise allocates its wealth $W_t$ across two types of assets: The risk-free asset accounts for a proportion of $(1-\alpha),\alpha\in(0,1)$ with a return rate of $r_fdt$, where $r_f=\log R_f$ and $R_f$ represents the risk-free asset return. The risk asset proportion is $\alpha$, with a return of $(r_f+\hat{\mu})dt$. The risk asset also includes a \textit{Brownian Motion} term $\sigma dZ$. When a \textit{Poisson} event occurs, the asset loss ratio is $L$, resulting in a \textit{Poisson Jump} term of $(-LdN)$, where $\mathbb{E}[dN]=wdt$. Since the overall expected excess return $\hat{\mu}$ should equal the sum of the continuous portion of the return and the expected compensation for \textit{Poisson Jump}, we have $\hat{\mu}=\mu_0+w\mathbb{E}L$, where $\mu_0=\log(\mathbb{E}R/R_f)>0$ represents the continuous portion of the excess return. The enterprise invests capital factors at a rate $K_t$, constrained by financial friction $\lambda$. Hence, $K_t=\lambda W_t$. Therefore, the enterprise's wealth dynamics could be written as:
\[dW_t=(1-\alpha)W_tr_fdt+\alpha W_t[(r_f+\hat{\mu})dt+\sigma dZ-LdN]-\lambda W_tdt\]

Hence:
\[\begin{gathered}dW_t=W_t[r_f+\alpha\hat{\mu}-\lambda]dt+\alpha W_t\sigma dZ-\alpha W_tLdN\\\frac{dK}{K}=\frac{dW}{W}=\underbrace{[r_f+\alpha\hat{\mu}-\lambda]dt+\alpha\sigma dZ}_{Diffusion}\underbrace{-\alpha LdN}_{Poisson~Jump}\end{gathered}\]

Denote $f(K_t)=\log K_t$. According to \textit{Ito's} lemma:
\[df(K_t)=\underbrace{\frac{\partial f}{\partial K}dK+\frac{1}{2}\frac{\partial^2f}{\partial K^2}(dK)^2}_{Diffusion}+\underbrace{\Delta f(K_t)}_{Poisson~Jump}\]

And:
\[\begin{gathered}dK_{Diffusion}=K\left((r_{f}+\hat{\mu}-\lambda)dt+\alpha\sigma dZ\right)\\\frac{\partial f}{\partial K}dK_{Diffusion}=(r_f+\hat{\mu}-\lambda)dt+\alpha\sigma d\\\frac{1}{2}\frac{\partial^2f}{\partial K^2}\left(dK_{Diffusion}\right)^2=-\frac{1}{2}\alpha^2\sigma^2dt\end{gathered}\]

Due to:
\[K_t=K_{t-1}(1-\alpha L)\]

Then:
\[\begin{gathered}\Delta f(K_t)=\mathrm{~log}K_t-\mathrm{log}K_{t-1}=\log\left(\frac{K_t}{K_{t-1}}\right)=\log(1-\alpha L)\\d\mathrm{log}K=\left(r_f+\alpha\hat{\mu}-\frac{1}{2}\alpha^2\sigma^2-\lambda\right)dt+\alpha\sigma dZ+\log(1-\alpha L)dN\end{gathered}\]

Solve the \textit{SDE}, we get:
\[\begin{gathered}\mathrm{log}K_t-\mathrm{log}K_0=\int_0^t\left(r_f+\alpha\hat{\mu}-\frac{1}{2}\alpha^2\sigma^2-\lambda\right)ds+\int_0^t\alpha\sigma dZ_s+\int_0^t\log(1-\alpha L)dN_s\\\mathrm{log}K_t=\mathrm{~log}K_0+\left(r_f+\alpha\hat{\mu}-\frac{1}{2}\alpha^2\sigma^2-\lambda\right)t+\alpha\sigma dZ_t+\sum_{i=1}^{N_t}\log(1-\alpha L_i)\\K_t=\lambda W_0\mathrm{exp}\left\{\left(r_f+\alpha\hat{\mu}-\frac{1}{2}\alpha^2\sigma^2-\lambda\right)t+\alpha\sigma dZ_t\right\}\prod_{i=1}^{N_t}(1-\alpha L_i)\end{gathered}\]

Hence:
\[\mathbb{E}K_t=\lambda W_0\mathrm{exp}\left\{\left(r_f+\alpha\hat{\mu}-\frac{1}{2}\alpha^2\sigma^2-\lambda\right)t\right\}\times\mathbb{E}\mathrm{exp}(\alpha\sigma dZ_t)\times\mathbb{E}\prod_{i=1}^{N_t}(1-\alpha L_i)\]

According to the mathematics condition:
\[\begin{gathered}\mathbb{E}\mathrm{exp}(\alpha\sigma dZ_t)=\exp\left(\frac{1}{2}\alpha^2\sigma^2t\right),Z_t{\sim}N(0,t)\\\mathbb{E}\left[\prod_{i=1}^{N_t}(1-\alpha L_i)|N_t=n\right]=[\mathbb{E}(1-\alpha L)]^n\\\mathbb{E}\prod_{i=1}^{N_t}(1-\alpha L_i)=\sum_{n=0}^\infty\mathbb{P}(N_t=n)\times[\mathbb{E}(1-\alpha L)]^n,N_t\sim\mathrm{Poisson}(wt)\\\mathbb{E}\prod_{i=1}^{N_t}(1-\alpha L_i)=\sum_{n=0}^\infty e^{-wt}\frac{(wt)^n}{n!}\times[\mathbb{E}(1-\alpha L)]^n=\exp\{wt[\mathbb{E}(1-\alpha L)-1]\}\end{gathered}\]

That is, $\mathbb{E}K_t$ could be rewritten as:
\[\mathbb{E}K_t=\lambda W_0\mathrm{exp}\{(r_f+\alpha\hat{\mu}-\lambda+w[\mathbb{E}(1-\alpha L)-1])t\}\]

Then, we get:
\[\begin{gathered}\frac{e^{\lambda t}}{\lambda}=W_0(\mathbb{E}K_t)^{-1}e^{\{(r_f+\alpha\widehat{\mu}+w[\mathbb{E}(1-\alpha L)-1])t\}}\\\frac{e^{\lambda t}}{\lambda}=e^{\mu_i+\varepsilon_0+\varepsilon_{i,0}}D_0(1-\tau_0)(\mathbb{E}K_t)^{-1}e^{\{\left(r_f+\alpha\widehat{\mu}+w[\mathbb{E}(1-\alpha L)-1]\right)t\}}\end{gathered}\]

We denote:
\[\begin{gathered}f(\lambda,t)=\frac{e^{\lambda t}}{\lambda}\\f(\mu_i,t)=e^{\mu_i+\varepsilon_0+\varepsilon_{i,0}}D_0(1-\tau_0)(\mathbb{E}K_t)^{-1}e^{\{\left(r_f+\alpha\widehat{\mu}+w[\mathbb{E}(1-\alpha L)-1]\right)t\}}\end{gathered}\]

When $t=t^*$
\[\begin{gathered}\frac{\partial f(\lambda,t^*)}{\partial\lambda}>0,\lambda\geq1,t^*>0\\\frac{\partial f(\mu_i,t^*)}{\partial\mu_i}>0,t^*>0\end{gathered}\]

\begin{figure}[htbp]
\centering
\includegraphics[width=10cm]{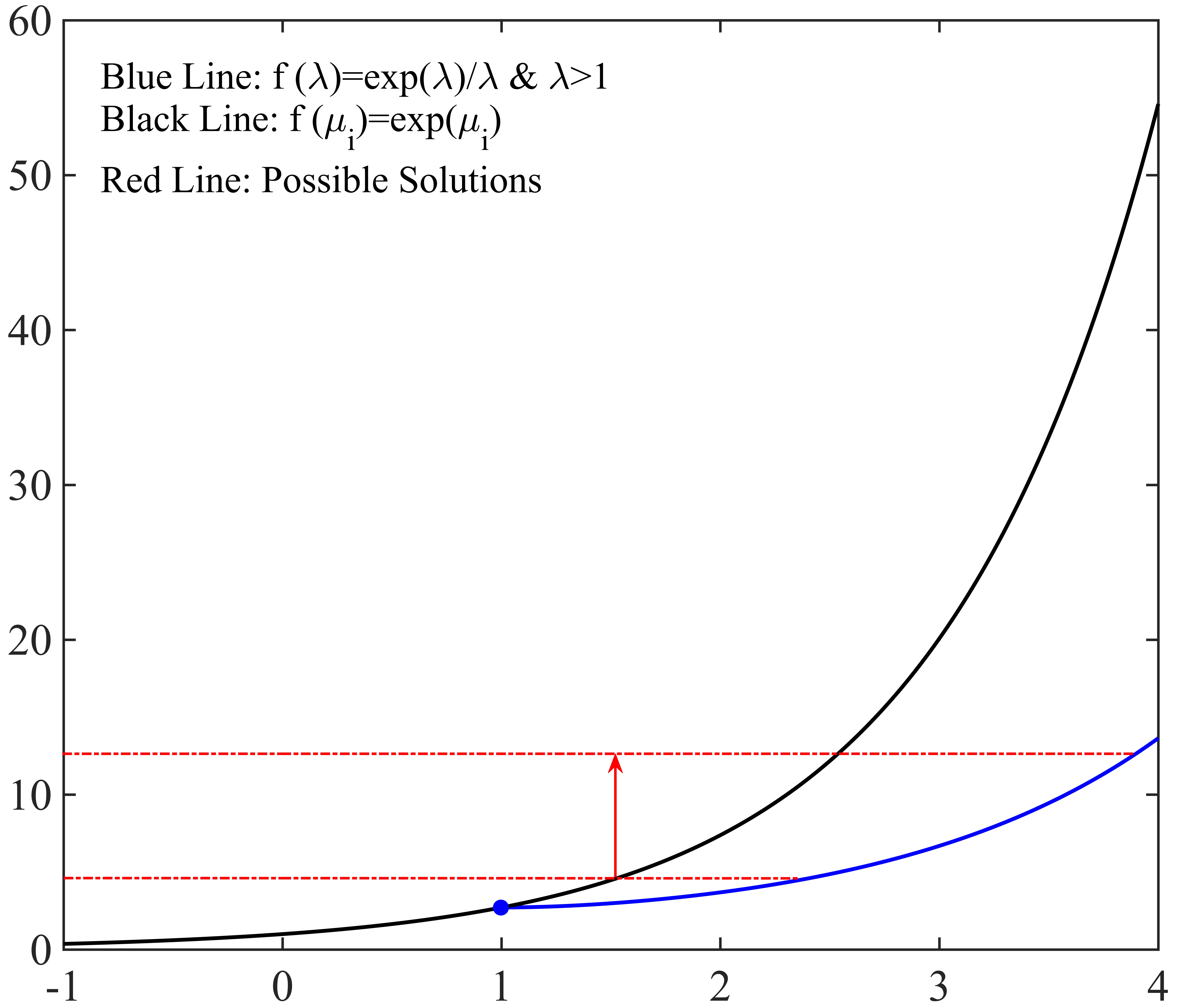}
\caption{\label{fig:F1}Possible Solution for $f(\lambda,t^*)=f(\mu_i,t^*)$}
\end{figure}

We constructed a special case to derive the analytical solution. As shown in Figure 1, when time $t^*$ is fixed and the equation's solutions exist, increasing the agent's ability $\mu_i$ leads to higher total output, which also leads the increase of $\lambda$ and reduces financial friction. Therefore, for firms with high data investment ability $\mu_H$, the financial friction coefficient is $\lambda_H$, and for firms with low data investment ability $\mu_L$, it is $\lambda_L$. Since $\mu_H>\mu_k>\mu_L$, it follows that $\lambda_H>\lambda_L>1$.
\end{proof}

\section{Conclusion}
In this brief analytical paper, our objective is to examine potential heterogeneity in the data economy by analyzing agents' varying ability to utilize data. We categorize agents in the data economy as either data users or data providers. Data users form data capital through utilizing providers' data for investment purposes and pay providers for data costs, which we define as data investment ability. Theoretical analysis indicates that : Agents with higher data investment ability can acquire greater utility by investing in data. Therefore, they have reason to pay higher data costs to obtain a larger scale of data. Furthermore, this leads agents with higher data investment ability to achieve greater productivity, faster technological progress, and face lower financial frictions.

\singlespacing
\setlength{\bibsep}{2pt}
\bibliographystyle{plainnat}
\bibliography{main}

\end{document}